\pdfoutput=1

\documentclass[11pt]{article}

\usepackage[margin=1in]{geometry}
\usepackage[english]{babel}
\usepackage[compact]{titlesec}
\usepackage[T1]{fontenc}

\usepackage{authblk} 

\usepackage{latexsym}
\usepackage{amssymb}
\usepackage{amsmath}
\usepackage{amsthm}
\usepackage{booktabs}
\usepackage{enumitem}
\usepackage{color}


\usepackage{etoolbox} 
\newcommand{\tightdots}{\cdot\hspace{-2.2pt}\cdot\hspace{-2.2pt}\cdot}	
\usepackage{graphicx} 
\usepackage[capitalise,noabbrev]{cleveref}
\usepackage{bbm}
\usepackage{natbib}
\usepackage{thmtools}


\usepackage{tikz}

\newcommand*\circled[1]{\tikz[baseline=(char.base)]{\node[shape=circle,draw,inner sep=2pt] (char) {#1};}}
\newcommand*\dashedcirc[1]{\tikz[baseline=(char.base)]{\node[shape=circle,draw,densely dotted,inner sep=2pt] (char) {#1};}}


\newcommand{\calI}{\mathcal{I}}
\newcommand{\calV}{\mathcal{V}}
\newcommand{\calN}{\mathcal{N}}
\newcommand{\calM}{\mathcal{M}}

\newcommand{\EF}[1]{\ifstrempty{#1}{\textrm{\textup{EF}}}{\textrm{\textup{EF{$#1$}}}}}
\newcommand{\EFX}[1]{\ifstrempty{#1}{\textrm{\textup{EFX}}}{\textrm{\textup{EFX{$#1$}}}}}
\newcommand{\sEF}[1]{\textrm{\textup{sEF{$#1$}}}}
\newcommand{\DEF}[1]{\ifstrempty{#1}{\textrm{\textup{DEF}}}{\textrm{\textup{DEF-{$#1$}}}}}
\newcommand{\pEF}[1]{\ifstrempty{#1}{\textrm{\textup{pEF}}}{\textrm{\textup{pEF{$#1$}}}}}

\newcommand{\PO}[1]{\ifstrempty{#1}{\textrm{\textup{PO}}}{\textrm{\textup{PO{$#1$}}}}}
\newcommand{\fPO}[1]{\ifstrempty{#1}{\textrm{\textup{fPO}}}{\textrm{\textup{fPO{$#1$}}}}}

\usepackage[capitalise,noabbrev]{cleveref}
\Crefname{remark}{Remark}{Remarks}
\Crefname{rmk}{Remark}{Remarks}
\Crefname{dfn}{Definition}{Definitions}
\Crefname{thm}{Theorem}{Theorems}
\Crefname{cor}{Corollary}{Corollaries}
\Crefname{lem}{Lemma}{Lemmas}
\Crefname{examplex}{Example}{Examples}
\Crefname{prop}{Proposition}{Propositions}

\newtheorem{example}{Example}
\newtheorem{theorem}{Theorem}
\newtheorem*{remark}{Remark}

\newtheorem{definition}{Definition}
\newtheorem{coro}{Corollary}
\newtheorem{lemma}{Lemma}
\newtheorem{proposition}{Proposition}


\newcount\Comments  
\Comments=1
\newcommand{\kibitz}[2]{\ifnum\Comments=1{\color{#1}{#2}}\fi}


\title{Distribution of Chores with Information Asymmetry}

\author[1]{Hadi Hosseini}
\author[2]{Joshua Kavner}
\author[3]{Tomasz Wąs}
\author[2]{Lirong Xia}
\affil[1]{Pennsylvania State University}
\affil[2]{Rensselaer Polytechnic Institute}
\affil[3]{University of Oxford}

\begin{document}

\maketitle

\begin{abstract}
A well-regarded fairness notion when dividing indivisible chores is envy-freeness up to one item (\EF{1}), which requires that pairwise envy can be eliminated by the removal of a single item. While an \EF{1} and Pareto optimal (\PO{}) allocation of goods can always be found via  well-known algorithms, even the existence of such solutions for chores remains open, to date. We take an \textit{epistemic} approach utilizing information asymmetry by introducing \textit{dubious chores}--items that inflict no cost on receiving agents but are perceived costly by others. On a technical level, dubious chores provide a more fine-grained approximation of envy-freeness than \EF{1}. We show that finding allocations with minimal number of dubious chores is computationally hard. Nonetheless, we prove the existence of envy-free and fractional \PO{} allocations for $n$ agents with only $2n-2$ dubious chores and strengthen it to $n-1$ dubious chores in four special classes of valuations. Our experimental analysis demonstrates that often only a few dubious chores are needed to achieve envy-freeness. 
\end{abstract}

\section{Introduction}

The fair allocation of resources and tasks is a fundamental role of any economy.
It has garnered attention of diverse communities spanning computer science, artificial intelligence, political science and economics due to its broad applicability in healthcare, charitable donations, waste management, and task allocation \citep{Aleksandrov15:Online,B11combinatorial,budish2017course,goldman2015spliddit,parkes2015beyond,pathak2021fair}. 
Traditionally, this field is concerned with allocating \textit{indivisible} items that are positively-valued by agents (i.e., \textit{goods}). However, many practical decisions distribute indivisible negatively-valued tasks (i.e., \textit{chores}) too.

The allocation of chores is fundamentally different from its goods counterpart since chores must be fully allocated, whereas goods may be disposed of at no cost. Moreover, algorithmic techniques and axiomatic approaches developed for fair allocation of goods do not immediately translate to this setting.
Thus, in recent years a large body of work has focused on investigating fairness axioms and algorithmic techniques
specifically for allocation of chores
\citep{Aziz_Rauchecker_Schryen_Walsh_2017,bogomolnaia2019dividing,chaudhury2020dividing,Freeman2020:Equitable,hosseini2022ordinal}.
The canonical fairness notion in the literature is envy-freeness (\EF{}), an intrapersonal property which requires that each agent weakly prefers their own bundle to any other \citep{foley1966resource}. However, \EF{} allocations may not exist and determining their existence is computationally intractable, motivating a number of relaxations.

One well-studied relaxation,
\textit{envy-freeness up to one item} (\EF{1}), 
requires that any pairwise envy between the agents can be eliminated by the counterfactual removal of a single item (a good from the envied agent or a chore from the envious agent) \citep{Lipton04:Approximately,B11combinatorial}.
An \EF{1} allocation of goods always exists and can be computed efficiently along with economic efficiency notions such as Pareto optimality (\PO{}) \citep{barman2018finding,caragiannis2019envy}.
In contrast, for chores, not only computing an \EF{1} and \PO{} allocation is unknown, but even the existence of such allocations remains open to date. (\EF{1} allocations without \PO{} can be computed in polynomial time \citep{aziz2022fair,BSV21approximate}.)
This has led several works to study \EF{1} and \PO{} allocations under restricted domains \citep{aziz2022fair,ebadian22:fairly,garg2023improving,hosseini2022fairly}.

We take a different, \textit{epistemic}, approach
that utilizes information asymmetry. Rather than require counterfactual reasoning to achieve fairness, as with \EF{1}, epistemic fair division considers allocations
that are partially hidden \citep{ABC+18knowledge}
or for which information is withheld about 
the goods \citep{Hosseini2020:Fair,bliznets2023fair}.
To this end, we 
differentiate between the allocation of chores that agents are informed about and the allocation they actually receive. The difference is an over-representation about which tasks agents complete, above-and-beyond those they are actually assigned. That is, agents must only complete a subset of their assigned chores, which may contain duplicates. 

We represent this technically through the introduction and allotment of \emph{dubious chores}, copies of the original chores that bear no cost for agents that receive them, but are seen as costly by others.
This models 
settings where agents do not have
direct means of communication and cannot verify the exact costs incurred by each agent, such as with distributed computing of high-complexity problems or decentralized training of large language models.
Formally, we propose a fairness notion of
\textit{envy-freeness up to $k$ dubious chores} (\DEF{k}).
Consider $n$ agents, $m$ chores,
and an allocation $A = (A_1, \ldots, A_n)$.
We introduce up to $k$ \emph{dubious} copies of the original chores, representing additional tasks agents appear to be completing, that are distributed via the \emph{dubious allocation} $A^D = (A^D_1,\dots, A^D_n)$. 
If every agent $i$ prefers its own allocation $A_i$ to the perceived allocation $A_h \cup A^D_h$ (combined real and dubious) of each other agent $h \neq i$, then we say that allocation $A$ is \DEF{k}.

This approach differs from related work about \EF{1}, duplicating chores, and chores with subsidies.
First, agents in our approach measure whether an allocation is fair
based on their information,
which is subjective and may differ across agents.
This contrasts \EF{1} which requires agents to counterfactually reason about 
other agents' bundles. Moreover, \DEF{k} offers a more fine-grained approximation of \EF{} than \EF{1}. Whereas the latter recognizes any allocation \emph{close to} an envy-free allocation, \DEF{k}
precisely identifies a trade-off between fairness and transparency: a \DEF{0} allocation is necessarily \EF{}, while at most $k$ units of transparency must be compromised to make a \DEF{k} allocation envy-free for $k>0$.

Second, 
\citet{akrami2023fair} recently introduced real copies of chores (i.e., ``\emph{surplus}'') into the original fair division instance. Our approach differs on a conceptual level in that
duplicates introduce \textit{real} additional cost on the receiving agents, reducing overall welfare, and more duplicated chores than dubious ones may be needed to eliminate envy. 
%
Third, our method differs from approaches of eliminating envy by subsidizing agents with positively-valued money 
\citep{brustle2020one,caragiannis2021computing,halpern2019fair}. Whereas these approaches change real allocations by introducing unit-valued goods to achieve \EF{} fairness, agents in our approach perceive their allocation as fair according to their visible information.

\subsection{Our Contributions}

We propose a novel fairness notion, \DEF{k}, that utilizes information asymmetry with dubious chores, items that are perceived as costly but inflict no actual cost on the receiving agent. 
Conceptually, dubious chores provide a natural approach 
when no envy-free solution exists.
Technically,
\DEF{k} provides a more fine-grained approximation of envy-freeness than \EF{1}
which enables progress towards addressing open problems in fair allocation of indivisible chores, such as the existence and computation of \EF{1} and \PO{}.
As such, we make the following technical contributions for agents with additive valuations:
\begin{itemize}
    \item We show the hardness of finding the minimal $k$ for which \DEF{k} exists and 
    its constant approximation (\cref{cor:no_approx:defk}). In contrast, we obtain that \DEF{(n-1)} allocations always exist and can be computed efficiently (\cref{thm:round-robin:defn-1}).

    \item 
    We achieve both fairness and efficiency by showing that there always exists an allocation satisfying \DEF{(2n-2)} and fractional Pareto optimality (\fPO{}), a more demanding efficiency requirement than \PO{} (\cref{thm:def2n-2+po}).
    Furthermore, we strengthen this result showing that \DEF{(n-1)} and \fPO{} allocations can be efficiently computed in four special cases: when agents' valuations are identical (\cref{thm:defn-1+po:identical}), binary (\cref{thm:defn-1+po:binary}), or bivalued (\cref{thm:defn-1+po:bivalued}), or upon restricting chores to be two-typed (\cref{thm:defn-1+po:2types}).
    Under binary valuations our algorithm also guarantees \emph{envy-freeness up to any chore} (\EFX{})~\cite{aziz2022fair,Caragiannis19:unreasonable}, a strengthening of \EF{1} under which each pairwise envy can be eliminated by the removal of \emph{any} negatively-valued chore of the envious agent.

    \item Our empirical study demonstrates that \texttt{Round Robin} produces allocations requiring small numbers of dubious chores to become envy-free.
    This beats the theoretical existence guarantee of \DEF{(n-1)} by half. \DEF{k} also appears to correlate well with \PO{}, which optimally require few dubious chores to become envy-free.
\end{itemize}


\section{Related Work}
\label{sec:related_work}

While the literature on fair division of indivisible items is vast (see e.g., \citet{ABF+22fair} for a survey), our work fits best in the contexts of algorithms guaranteeing fair and efficient allocations (often for restricted preference classes), fairness achieved with copies, and fairness achieved by imposing epistemic constraints. 
The following related works hold for agents with additive valuations; see Section \ref{sec:prelims} for formal definitions.

\subsection{Algorithms Yielding Fairness and Efficiency}
For goods instances and agents with general valuations, \citet{murhekar2021fair} identified a pseudo-polynomial time algorithm for computing \EF{1} and \fPO{} 
that extends to polynomial time by fixing the numbers of agents or different values that agents have over goods. This improves algorithms by \citet{aziz2022fair} and \citet{barman2018finding} which are polynomial time for two agents with general valuations and if valuations are bounded, respectively. \citet{garg2021computing} identified a polynomial time algorithm for computing EFX, a strict generalization of \EF{1}, and \fPO{}, for bivalued goods. The polynomial time algorithm of \citet{gorantla2023fair} computes EFX for two-types of goods. 
\citet{garg2023improving}'s polynomial time algorithm identifies \EF{1}+\fPO{} for \emph{two-type} agents, where each agent has one of two utility functions. 
Still, computing \EF{1}+\fPO{} in polynomial time for the general goods case remains an open question.

For chores instances, 
\citet{garg2022fair} and \citet{ebadian22:fairly} independently identified polynomial time algorithms to compute \EF{1}$+$\fPO{} allocations when agents' valuations only take on one of two possible values.
Similarly, \citet{aziz2023twotypes} provided an algorithm to compute \EF{1}$+$\fPO{} allocations when there are only two types of chores.
\citet{garg2023improving} further identified polynomial time algorithms that compute an allocation that is \EF{1} and \fPO{} for either three agents, two-type agents, or \emph{personalized} bivalued chores instances---bivalued chores in which agents may have different values---with different enough valuations.

\subsection{Epistemic Fairness and Copies}
%
Our work aligns closely with 
\textit{epistemic} fair division which describes
agents 
with
limited information
about allocations.
\citet{ABC+18knowledge}, and \citet{caragiannis2023existence} define epistemic \EF{} and EFX 
as
allocations in which each agent, based on information about their own bundle only, believes it is possible 
to allocate the remaining items to other agents
such 
that the overall allocation satisfies \EF{} or EFX.
Thus, each agent's view may be substantially different from each other or the real underlying allocation. 
The notion of hidden envy-freeness of \citet{Hosseini2020:Fair} supposes that each agent agrees on the visible information, but their belief about what's hidden may differ.
Other 
related works include a Bayesian approach to incomplete information \citep{chen2017ignorance}, a generalization of maximin share \citep{chan2019maximin}, and where agents only compare their values to their neighbors on a social network \citep{abebe2017fair,bei2017networked,beynier2019local,bredereck2022envy,chevaleyre2017distributed}. \citet{bliznets2023fair}
combine these approaches to study \EF{} for agents embedded in a social network who can hide at most one good in their allocation, and at most $k$ goods are hidden.

The concept of ``copies'' of items appears in several works,
but in very different settings to ours.
For example, \citet{akrami2023fair} considered actual copies of chores that are added to allocations
to obtain fair solutions. 
In this case, each actual copy of a chore creates
additional disutility for the agents.
Therefore, their approach for chores resembles the
literature on goods allocation with charity,
in which a small number of goods may be left unallocated
in order to obtain a fair solution. Literature on EFX with charity was introduced by \citet{caragiannis2019envy} and followed-up with many relaxations \citep{chaudhury2021little,berger2022almost,akrami2022ef2x,berendsohn2022fixed,mahara2023extension}.
On a technical level, \citet{akrami2023fair} develops an algorithm that yields an \EF{1} and \fPO{} allocation with $n-1$ surplus, whereas our algorithms guarantees \EF{} and \fPO{} upon adding $2n-2$ dubious chores to the original instances.

Separately, \citet{aleksandrov2022envy} considered
modified definitions of \EF{1} and EFX in which a counterfactual
``copy'' of an item is added to a bundle
to eliminate pairwise envy. 
\citet{gorantla2023fair} studied the setting of goods divided into several types such that each agent values goods of the same type identically.
Hence, goods of each type can be seen as ``copies,''
but they are given exogenously, in contrast to our setting.
Likewise, \citet{aziz2023twotypes}
focused on the setting
with two types of chores.

\section{Preliminaries}
\label{sec:prelims}

\paragraph{Problem instance.}
An \emph{instance} $\calI = \langle \calN, \calM, \calV \rangle$ of the fair division problem is defined by a set of $n$ \emph{agents} $\calN$, a set of $m$ \emph{chores} $\calM$, and a \emph{valuation profile} $\calV = (v_i)_{i \in \calN}$ that specifies the preferences of every agent $i \in \calN$ over each subset of the chores in $\calM$ via a \emph{valuation function} $v_i: 2^{\calM} \rightarrow \mathbbm{R}$. If $v_i(S) \geq 0 ~\forall i \in \calN, S \subseteq \calM$, then we call the items \emph{goods}; likewise, if $v_i(S) \leq 0$ the items are \emph{chores}.
We assume that the valuation functions are \emph{additive}, i.e., for any $i \in \calN$ and $S \subseteq \calM$, $v_i(S) = \sum_{c \in S} v_i(\{c\})$, where $v_i(\emptyset) = 0$. 
We write $v_i(c)$ instead of $v_i(\{c\})$ for a single chore $c \in \calM$. 

\paragraph{Allocation.}
A real \emph{allocation} $A = (A_1,\dots,A_n)$ refers to an $n$-partition of the set of chores $\calM$, where $A_i \subseteq \calM$ is the \emph{bundle} allocated to agent $i \in \calN$ and no chore of $\calM$ is unallocated. The utility of agent $i$ for the bundle $A_i$ is given by $v_i(A_i) = \sum_{c \in A_i} v_{i}(c)$.
An allocation is \emph{fractional} if chores may be shared between agents 
and \emph{integral} otherwise; each chore is nevertheless fully allocated across the agents.

\paragraph{Restricted valuations.}

We consider four special cases of agent valuation functions.
The instance $\calI$
has \emph{identical valuation} if for any two agents $i, j \in \calN$
and chore $c \in \calM$ it holds that $v_{i}(c) = v_{j}(c)$.
It has \emph{binary valuations} if $v_{i}(c) \in \{-1,0\}$
for every agent $i \in \calN$ and chore $c \in \calM$.
The instance has \emph{bivalued valuations}
if there exist two real numbers $x < y < 0$ such that
$v_{i}(c) \in \{x,y\}$
for every $i \in \calN$ and $c \in \calM$.
Finally, the instance has \emph{two types of chores}
if $\calM$ can be partitioned into two sets $X$ and $Y$
such that, for every agent $i \in \calN$
and two chores $c, c'$ from the same set,
it holds that $v_i(c) = v_i(c')$.

\begin{definition}[\textbf{Dubious chores}]
A \emph{dubious} chore $c'$ allocated to agent $i \in \calN$ is a \emph{copy} of the chore $c \in \calM$ that upholds the same perceived costs for all agents but no incurred cost for $i$. That is, $v_{h}(c') = v_{h}(c)~ \forall h \in \calN\setminus\{i\}$ and $v_{i}(c') = 0$.
Given an instance $\calI$ and real allocation $A$, a \emph{dubious multiset} $D$ refers to a multiset containing dubious chores copied from $\calM$. A \emph{dubious allocation} $A^D =(A_1^{D}, \ldots, A_n^{D})$ is an $n$-partition of the multiset $D$. 

We define the \emph{augmented allocation} $A^* = A \cup A^D$ such that for each $i \in \calN$, $A^*_i = A_i \cup A^D_i$.
The utility of agent $i$ for its own augmented bundle is $v_i(A^*_i) = v_i(A_i)$ while for other agents $h\in \calN$ is $v_h(A^*_i) = v_h(A_i) + v_h(A^D_i)$.
\label{definition:dubious_chore}
\end{definition}

\begin{definition}[\textbf{Envy-freeness}]
	An allocation $A$ (real, dubious, or augmented) is \emph{envy-free} (\EF{}) if for every pair of agents $i,h \in \calN$, $v_i(A_i) \geq v_i(A_h)$. 
	The allocation $A$ is \emph{envy-free up to one chore} (\EF{1}) if for every pair of agents $i,h \in \calN$ such that $A_i \neq \emptyset$, there exists some chore $c \in A_i$ such that $v_i(A_i \setminus \{c\}) \geq v_i(A_h)$.
        Furthermore, $A$ is \emph{envy-free up to any chore} (\EFX{}) if for every pair of agents $i,h \in \calN$ and chore $c \in A_i$ such that $v_i(c)<0$, it holds that $v_i(A_i \setminus \{c\}) \geq v_i(A_h)$.
\end{definition}

\begin{definition}[\textbf{Envy-freeness with dubious chores}]
A real allocation $A$ is said to be \emph{envy-free up to $k$ dubious chores} (\DEF{k}) if there exists a dubious multiset $D$ and dubious allocation $A^D$ such that: (i) $D$ consists of up to $k$ dubious chores copied from $\calM$, and (ii) the augmented allocation $A^* = A \cup A^D$ is envy-free. 
\end{definition}

\begin{remark}
    It follows from the above definitions that an allocation is \EF{} if and only if it is \DEF{0}. 
    Moreover, for any $k \geq 0$, a \DEF{k} allocation is also \DEF{(k+1)}, but the converse may not hold.
\end{remark}

In the full version of the paper \citep{hosseini2023distribution},
we also consider two stronger notions of \DEF{k}: \emph{singly} and \emph{personalized}.
There, we introduce restrictions on how dubious chores may be allocated that may be well motivated in some of the potential applications.

\paragraph{Pareto optimality.}
An allocation $A$ is \emph{Pareto dominated} by allocation $B$ if $v_i(B_i) \geq v_i(A_i)$ 
$\forall i \in \calN$, with at least one of the inequalities being strict. A \emph{Pareto optimal} (\PO{}) allocation is one that is not Pareto dominated by any other allocation.
A \emph{fractional Pareto optimal} (\fPO{}) allocation is one that is not Pareto dominated by any other fractional allocation.

Note that an \fPO{} allocation is also \PO, but the converse may not necessarily hold. \PO{} does imply \fPO{} for special cases such as bivalued preferences \citep{ebadian22:fairly}.

\begin{example}
Consider the instance $\calI$ in Table \ref{tab:prelims_ex} with $n$ agents and $m=n$ chores defined with agents' valuations $v_{i}(c)$ for $i \in \calN, c \in \calM$. 
A real allocation $A$ is identified by the circled valuations such that $A_i = \{c_i\},~\forall i \in \calN$. Clearly $A$ is \EF{1} and \PO{} and agent $n$ envies each other agent $i \neq n$. 

Consider the dubious multiset $D$ containing $n-1$ dubious copies of $c_n$ that is allocated as $A^D = (\{c_n'\}, \{c_n'\}, \ldots, \{c_n'\}, \emptyset)$. Then, the augmented allocation $A^* = A \cup A^D$ is envy-free. This follows because for any pair of agents $i,h \in \calN \backslash \{n\}$, 
we have that $v_n(A_i^*) = -n-1 < -n = v_n(A_n^*)$, $v_i(A_h^*) = -n-1 < -1 = v_i(A_i^*)$, and $v_i(A_n^*) = -n < -1 = v_i(A_i^*)$. 
Hence, $A$ is \DEF{(n-1)}. 

\label{ex:prelims_ex}
\end{example}

\begin{table}[t] 
    \setlength{\tabcolsep}{4pt}
    \centering
    \begin{tabular}{c|c c c c c | c c c c}
        agent &  $c_1$ & $c_2$ & $\ldots$ & $c_{n-1}$ & $c_n$ &
        $c_n'$ & $c_n'$ & $\ldots$ & $c_n'$\\
        \hline
        $1$ 
            & $\circled{-1}$ & $-1$ & $\ldots$ & $-1$ & $-n$ 
            & $\dashedcirc{0}$ & $-n$ & $\ldots$ & $-n$\\
        $2$
            & $-1$ & $\circled{-1}$ & $\ldots$ & $-1$ & $-n$
            & $-n$ & $\dashedcirc{0}$  & $\ldots$ & $-n$\\
        $\vdots$ & & $\vdots$ & $\ddots$ & & $\vdots$ 
            & $\vdots$ & & $\ddots$ & $\vdots$\\
        $n-1$ 
            & $-1$ & $-1$ & $\ldots$ & $\circled{-1}$ & $-n$ 
            & $-n$ & $-n$  & $\ldots$ & $\dashedcirc{0}$\\
        $n$ 
            & $-1$ & $-1$ & $\ldots$ & $-1$ & \circled{-$n$}
            & $-n$ & $-n$  & $\ldots$ & $-n$\\
    \end{tabular}
    \caption{An example allocation and its dubious augmentation.}
    \label{tab:prelims_ex}
\end{table}

\begin{remark}
Every real allocation $A$ is \DEF{(m \cdot (n-1))} since it can be augmented with a dubious multiset $D$ with $n-1$ dubious copies of each chore. Allocating these one per agent, except for the agent that receives the corresponding real chore in $A$, will make $A^* = A \cup A^D$ envy-free. However, this is a trivial allotment akin to only revealing each agent's own bundle to themselves, as in epistemic envy-freeness \citep{ABC+18knowledge,caragiannis2023existence}. In this work we provide computational limits and theoretical bounds on the minimum $k$ for which \DEF{k} allocations exist.
\end{remark}

\begin{table*}[t!]
\setlength{\tabcolsep}{1.75pt}
\centering
\footnotesize
\begin{tabular}{cc|cccc cccc cc cccc cccc cc cc cc cc cc cc | cc cc cc cc} 
agent &
& $d^1_{1,1}$ & $\tightdots$ & $d^4_{1,1}$ &
& $d^1_{1,2}$ & $\tightdots$ & $d^4_{1,2}$ &
& $\cdots$ &
& $d^1_{1,n}$ & $\tightdots$ & $d^4_{1,n}$ &
& $d^1_{2,1}$ & $\tightdots$ & $d^4_{2,1}$ &
& $d^1_{2,2}$ &
& $\cdots$ &
& $d^4_{2,n}$ &
& $d^1_{3,1}$ &
& $\cdots$ &
& $d^4_{n,n}$ &
& $c_{1}$ && $c_{2}$ && $\cdots$ && $c_{m}$ \\
\hline
$0$ &
& $0$ & $\tightdots$ & $0$ &
& $0$ & $\tightdots$ & $0$ &
& $\cdots$ &
& $0$ & $\tightdots$ & $0$ &
& $0$ & $\tightdots$ & $0$ &
& $0$ &
& $\cdots$ &
& $0$ &
& $0$ &
& $\cdots$ &
& $0$ &
& $\circled{0}$ && $\circled{0}$ && $\cdots$ && $\circled{0}$
\\
$1$ &
& $\circled{-1}$ & $\tightdots$ & $\circled{-1}$ &
& $-1$ & $\tightdots$ & $-1$ &
& $\cdots$ &
& $-1$ & $\tightdots$ & $-1$ &
& $\circled{0}$ & $\tightdots$ & $\circled{0}$ &
& $0$ &
& $\cdots$ &
& $0$ &
& $\circled{0}$ &
& $\cdots$ &
& $0$ &
&  && &&  && 
\\
$2$ &
& $0$ & $\tightdots$ & $0$ &
& $\circled{0}$ & $\tightdots$ & $\circled{0}$ &
& $\cdots$ &
& $0$ & $\tightdots$ & $0$ &
& $-1$ & $\tightdots$ & $-1$ &
& $\circled{-1}$ &
& $\cdots$ &
& $-1$ &
& $0$ &
& $\cdots$ &
& $0$ &
&  &&  &&  && 
\\
$3$ &
& $0$ & $\tightdots$ & $0$ &
& $0$ & $\tightdots$ & $0$ &
& $\cdots$ &
& $0$ & $\tightdots$ & $0$ &
& $0$ & $\tightdots$ & $0$ &
& $0$ &
& $\cdots$ &
& $0$ &
& $-1$ &
& $\cdots$ &
& $0$ &
& $S_1$ && $S_2$ && $\cdots$ && $S_m$
\\
$\vdots$ &
&  & $\vdots$ &  &
&  & $\vdots$ &  &
&  &
&  & $\vdots$ &  &
&  & $\vdots$ &  &
&  &
& $\vdots$ &
&  &
&  &
& $\vdots$  &
&  &
&  &&  &&  && 
\\
$n$ &
& $0$ & $\tightdots$ & $0$ &
& $0$ & $\tightdots$ & $0$ &
& $\cdots$ &
& $\circled{0}$ & $\tightdots$ & $\circled{0}$ &
& $0$ & $\tightdots$ & $0$ &
& $0$ &
& $\cdots$ &
& $\circled{0}$ &
& $0$ &
& $\cdots$ &
& $\circled{-1}$ &
&  &&  &&  && 
\end{tabular}
\caption{An illustration to the proof of \cref{thm:veryfying:defk:hardness}. }
\label{table:thm:veryfying:defk:hardness}
\end{table*}

To conclude this section, we recall three techniques for allocating chores to agents that we use in our subsequent theorems.

\paragraph[Round Robin algorithm.]{Round Robin algorithm.} 
Fix a permutation $\sigma$ of the agents. The \texttt{Round Robin} algorithm cycles through the agents according to $\sigma$. In each round, an agent gets its favorite (i.e., least undesirable) chore from the pool of remaining chores.

\paragraph[Envy Graph algorithm.]{Envy Graph algorithm \citep{Lipton04:Approximately}.}
The \texttt{Envy Graph} algorithm iterates through each chore $c$ in rounds by first resolving cycles in the \emph{top trading envy graph} $T_A$ based on the partial allocation $A$. This graph is built over $n$ vertices and composed of edges $(i,h)$ if agent $i$'s (weakly) most preferred bundle is $h$ in $A$. The algorithm resolves cycles in $T_A$, by transferring to each agent $i$ the bundle of the next agent in the cycle. Afterwards, one agent who does not envy any other agent is given $c$.

\paragraph{Fisher Markets.}
A \emph{pricing vector} is a function $p : \calM \rightarrow \mathbbm{R}_{\ge 0}$.
Intuitively, the price $p(c)$ of chore $c \in \calM$ is a payment that an agent receives for doing the chore.
For a subset $S \subseteq \calM$, we have $p(S) = \sum_{c \in S} p(c)$.
Given $p$, agent $i$'s \emph{minimum pain per buck} 
is the set of chores with the minimum absolute value of 
valuation to price ratio:
\(
    MPB_i = \arg \min_{c \in \calM} |v_i(c)|/p(c);
\)
intuitively, the most profitable chores for $i$.

A pair of a real allocation and a pricing vector $(A,p)$
is a \emph{Fisher market equilibrium} if every agent $i \in \calN$
receives only its minimal pain per buck chores, i.e.,
$A_i \subseteq MPB_i$.
It is well-known that every allocation of a Fisher market equilibrium is \fPO{} \citep{Mas-Colell95:Microeconomic}. 
A Fisher market equilibrium $(A,p)$ is \emph{price envy-free up to one item} (\pEF{1})
if for every $i, j \in \calN$, either $A_i = \emptyset$
or there exists $c \in A_i$ such that $p(A_i \setminus \{c\}) \le p(A_j)$.
It is known that if $(A,p)$ is \pEF{1},
then $A$ is \EF{1} \citep{ebadian22:fairly,garg2022fair}.

\section{Fairness with Dubious Chores}
\label{sec:fairness}

We begin by analyzing fairness with dubious chores without any efficiency requirement. We show the computational hardness of finding the smallest $k$ for which \DEF{k} exists, discuss the relation between \DEF{k} and \EF{1}, and prove that an existing algorithm can achieve \DEF{(n-1)} in polynomial time.
Later, in \cref{sec:fairness_and_efficiency}, we study the existence and computation of \DEF{k} along with \fPO{}.

We start with the decision problem of finding the optimal $k$ for which a \DEF{k} allocation exists. \citet{BSV21approximate} demonstrated that determining whether for a given instance there exists an \EF{} allocation is NP-complete.
We show that for an arbitrary \textit{fixed} constant $k \in \mathbbm{Z}_{\ge 0}$,
determining whether an instance has a \DEF{k} allocation is NP-complete, even in the cases of identical or binary valuations.

\begin{restatable}{theorem}{thmexstdefknp}
\label{thm:exst:defk:np}
Given an instance $\calI = \langle \calN, \calM, \calV \rangle$,
for every fixed constant $k \in \mathbbm{Z}_{\ge 0}$,
deciding if the instance admits a \DEF{k} allocation $A$ is NP-complete,
even when valuations are identical or binary.
\end{restatable}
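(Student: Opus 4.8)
The plan is to prove membership in NP first and then NP-hardness under each restriction (identical, and binary) separately.

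For membership, observe that because $k$ is a fixed constant, a \DEF{k} allocation admits a certificate of polynomial size: the original allocation $A$, together with the dubious multiset $D$ (at most $k$ entries, each naming a chore of $\calM$ and a recipient agent) and hence the induced dubious allocation $A^D$. Given such a certificate I would compute the augmented values $v_i(A_h^*) = v_i(A_h) + v_i(A_h^D)$ for all ordered pairs $(i,h)$ and check the $O(n^2)$ envy inequalities, which is polynomial. Thus the problem lies in NP. For hardness I would first settle the base case $k=0$, where \DEF{0} coincides with \EF{}. Under identical valuations an allocation is envy-free iff every bundle has the same value, so deciding \EF{} existence is exactly the question of splitting the multiset $\{v(c)\}_{c\in\calM}$ into $n$ equal-sum parts; for $n=2$ this is the classical PARTITION problem, giving NP-hardness of \DEF{0}. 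For binary valuations one instead reduces from a suitable NP-hard problem in the style of \citet{BSV21approximate}, using agents that disagree on which chores are costly.

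The crux is lifting hardness from $k=0$ to an arbitrary fixed $k$ without letting the extra $k$ dubious chores trivially repair a non-envy-free allocation. My main lever is the following lower bound, special to identical valuations: writing $m^\ast=\min_h v(A_h)$ for the value of the heaviest bundle, an augmentation makes $A$ envy-free only if every agent $h$ with $v(A_h)>m^\ast$ receives at least one dubious chore; since these chores are distinct across agents, the budget forces at least $n-k$ agents to tie at the value $m^\ast$. I would therefore reduce from PARTITION using $n=k+2$ agents -- two carrying the PARTITION items and $k$ padding agents -- together with $k$ equal ``anchor'' chores, scaled so that every bundle value is a multiple of a large $M$ and the anchors are strictly the largest chores. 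In a balanced (PARTITION-yes) allocation all $k+2$ bundles coincide, so $A$ is already \EF{} and hence \DEF{k}; conversely, I would argue that in a PARTITION-no instance no allocation can tie $n-k=2$ agents at the maximum size while keeping the remaining excesses cheaply coverable, so more than $k$ dubious chores are always required and $A$ is not \DEF{k}. For the binary case the same lifting is cleaner: since binary valuations permit agents and chores to be value-disjoint, I would attach to the binary core a self-contained gadget on $k+1$ fresh agents and fresh chores -- a binary analogue of \cref{ex:prelims_ex} -- that is \DEF{k} but not \DEF{(k-1)}; value-separation then forces the whole instance to spend all $k$ dubious chores inside the gadget, leaving the core to be resolved with none, i.e.\ to be genuinely \EF{}.

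The step I expect to be the main obstacle is precisely the PARTITION-no direction of the identical-valuation lifting: ruling out every ``cheap'' augmentation. Because dubious chores are copies and may duplicate the largest anchor, per-agent coverage is easy, so the argument must run entirely through the count of agents strictly above the minimum; I must design the anchors and the scale so that tying $n-k$ agents at the heaviest bundle value is impossible unless the PARTITION items split evenly. Establishing this ``ties force balance'' property for \emph{all} allocations, not merely the natural one, is the delicate part and is where the bulk of the case analysis will go.
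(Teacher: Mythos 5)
Your NP-membership argument is fine (the paper instead notes that for constant $k$ one can brute-force all $O(m^k\cdot n^k)$ dubious assignments, so the allocation alone serves as a certificate; both work), your $k=0$ base case for identical valuations is correct, and your counting lemma---that identical valuations force at least $n-k$ agents to tie at the minimum bundle value---is true and is implicitly what the paper's argument also exploits. Your overall instinct of padding a PARTITION core with extra agents and chores that absorb the dubious budget is also the paper's strategy. But both of your concrete liftings fail. For identical valuations, your construction with $n=k+2$ agents and $k$ anchors (which must be valued $-T$, since your yes-direction relies on all $k+2$ bundles coinciding) admits \DEF{k} allocations on PARTITION-no instances. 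Concretely, take $X=\{3,3,3,3,2\}$, $T=7$, which has no perfect partition, and any $k\ge 1$: give agent $1$ three of the $-3$ chores (value $-9$), agent $2$ one $-3$ chore, agent $3$ one anchor plus the $-2$ chore (value $-9$), and each remaining padding agent one anchor (value $-7$); then one dubious anchor copy for agent $2$ and one dubious copy of the $-2$ chore for each padding agent---exactly $k$ dubious chores---make the allocation envy-free, since every perceived bundle is at most $-9$ while no agent's own value is below $-9$. The slack is exactly the one you feared: one of the two agents tied at the minimum can absorb an anchor together with leftover items, and nothing pins the other bundles to $-T$; rescaling by $M$ is homogeneous and changes nothing. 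The paper avoids this by using $2k+2$ agents with $k$ dummy chores valued $-T$: counting total envy against the total value of remaining chores forces every $(-T)$-valued chore, original or dubious, onto a distinct agent among the last $2k$, so agents $1$ and $2$ hold main chores only, no dubious budget remains for them, and envy-freeness of the agents at $-T$ forces $v(A_1)=v(A_2)=-T$, i.e.\ a perfect partition.

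The binary lifting fails even more bluntly: with value separation, the combined instance is \emph{always} \DEF{0}, regardless of the core. Allocate every core chore to a gadget agent (gadget agents value core chores at $0$) and every gadget chore to a core agent (core agents value gadget chores at $0$); then each agent has own value $0$ and perceives every other bundle at a non-positive value, so the allocation is envy-free outright, and your reduction maps all instances to yes-instances. (If you instead force gadget chores onto gadget agents, the other direction breaks: core agents hold negatively valued bundles and perceive the $k$ empty-handed gadget agents at $0$, so eliminating their envy needs more than $k$ dubious copies of core chores.) The root cause in both failures is the same: envy couples all agents globally, so no ``self-contained'' padding component exists. The paper's binary proof (Lemma~\ref{lem:exst:defk:np:binary}) therefore runs a single global reduction from \texttt{SET SPLITTING} in which the padding (dummy) chores are valued $-1$ by \emph{every} agent, and the location of all dubious chores is derived by counting, with the set-splitting condition precisely eliminating the edge-to-color envy. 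Any repair of your plan essentially has to rebuild that kind of globally costly padding.
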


The proof uses the techniques
developed by \citet{BSV21approximate} and \citet{Hosseini2020:Fair}
and can be found in the full version of the paper \citep{hosseini2023distribution}.
Observe that \cref{thm:exst:defk:np} implies
that finding the minimal $k \in \mathbbm{Z}_{\ge 0}$
such that there exists a \DEF{k} allocation is NP-hard as well.
Moreover, since hardness holds also for $k=0$,
there is no polynomial-time constant approximation scheme for this problem,
unless P = NP.

\begin{coro}
\label{cor:no_approx:defk}
Given instance $\calI = \langle \calN, \calM, \calV \rangle$, unless P=NP, there is no polynomial-time algorithm that gives a constant approximation for the problem of finding \DEF{k} allocation with minimal $k$.
\end{coro}

\cref{thm:exst:defk:np} holds when $k$ is a fixed constant.
However, if we do not fix $k$,
even the verification problem becomes hard.
This means that the natural question of
whether a given allocation
might be augmented with $k$ dubious chores
to an envy-free allocation
is computationally intractable.

\begin{theorem}
\label{thm:veryfying:defk:hardness}
Given an instance $\calI = \langle \calN, \calM, \calV \rangle$
with binary valuations and an allocation $A$,
deciding if allocation $A$ is \DEF{k}
is NP-complete.
\end{theorem}
\begin{proof}
Given a dubious allocation $A^D$,
we can verify if the augmented allocation $A^* = A \cup A^D$
is \EF{} in polynomial time, so our problem is in NP.
We therefore focus on showing the hardness.

To this end, we follow the reduction from \textsc{Restricted Exact Cover by 3-Sets} (RX3C).
In RX3C instance $\calI = \langle \mathcal{U}, \mathcal{S} \rangle$,
we are given a universe of $n = 3k$ elements $\mathcal{U} = (u_1,u_2,\dots,u_n)$
and a family $\mathcal{S} = (S_1,S_2,\dots,S_m)$ of three-element subsets of $\mathcal{U}$
such that every element $u_i$ appears in exactly 3 subsets, i.e.
$|S_i \in \mathcal{S} : u_j \in S_i| = 3$ for every $j \in [n]$.
Hence, necessarily $m = 3k = n$.
The question is whether we can find a cover of size $k$, i.e.,
a subfamily of subsets $\mathcal{K} \subset \mathcal{S}$
such that $|\mathcal{K}| = k$ and
$\bigcup_{S_i \in \mathcal{K}} S_i = \mathcal{U}$.
Observe that in such a case
every element of $\mathcal{U}$ will appear in
subsets in $\mathcal{K}$ exactly once,
with means that $\mathcal{K}$ will be
an exact cover as well.
The problem is known to be NP-hard \citep{gonzalez1985clustering}.

Now, for every instance $\calI$ of RX3C,
we construct a corresponding instance
$\calI' = \langle \calN, \calM, \calV \rangle$
and an allocation $A$
(for an illustration see \cref{table:thm:veryfying:defk:hardness}).
First, let us take one agent for each element of $\mathcal{U}$
and one \emph{choosing} agent $0$.
Formally, $\calN = \{0,1,\dots, n\}$.
Next, for every ordered pair of agents $(i,j) \in \{1,\dots,n\}^2$
let us take $4$ \emph{dummy chores},
$d^1_{i,j}, d^2_{i,j}, d^3_{i,j},$ and $d^4_{i,j}$.
Also, let us take one chore for every subset in $\mathcal{S}$,
i.e., chores $c_1,\dots,c_m$.
This will give us a total of $4n^2 + m$ chores in $\calM$.
Further, let us describe the valuations of the agents.
For the choosing agent $0$,
we set the value of every chore to $0$.
In turn, for every agent $i \in \{1,\dots,n\}$,
we set its valuation of dummy chores to $-1$,
if $i$ is the first agent in the subscript,
and $0$, otherwise.
Formally,
for every $r \in \{1,2,3,4\}$ and  $(j, k) \in \{1,\dots,n\}^2$
let $v_i(d^r_{j,k}) = -1$, if $j=i$,
and $v_i(d^r_{j,k}) = 0$, otherwise.
As for chores $c_1,\dots,c_m$
the valuation depends on whether element $u_i$ belongs to
the subset corresponding to the particular chore.
Formally,
we set $v_i(c_j) = -1$, if $u_i \in S_j$, and $v_i(c_j) = 0$, otherwise.
Finally, let us specify the allocation $A$.
First, we give all of the chores $c_1,\dots,c_m$ to the choosing agent $0$,
i.e., $A_0 = \{c_1,\dots,c_m\}$.
Then, every agent $i \in \{1,\dots,n\}$,
receives all of the dummy chores in which it is in the second position of the subscript,
i.e., $A_i = \bigcup_{r =1}^4 \{d^r_{1,i},d^r_{2,i},\dots,d^r_{n,i}\}$.

In the remainder of the proof, let us show that 
$A$ is \DEF{k}, if and only if,
there exists an exact cover $\mathcal{K}$ in the original instance $\calI$.
To this end, observe that for every two agents $i,j \in \{1,2,\dots,n\}$
we have $v_i(A_j) = v_i(A_i) = -4$.
Hence, there is no envy among these agents.
Furthermore, for the choosing agent $0$, we have $v_0(S) = 0$ for every $S \subseteq \calM$,
thus it does not envy any other agents.
However, for every agent $i \in \{1,2,\dots,n\}$
we have that $v_i(A_0)$ is equal to the number of subsets in $\mathcal{S}$
that include element $u_i$ times $-1$, which is $-3$.
Hence, the choosing agent is envied by every other agent.
Now, in order to eliminate such envy,
we have to give the choosing agent $0$
at least one dubious 
copy of a chore that has value $-1$
for each of agents $1,\dots,n$.
Since we have $3k$ such agents,
each chore has value $-1$ for maximally $3$ agents, and
we can add up to $k$ dubious chores,
we have to choose a subset
of $k$ chores from $\{c_1,c_2,\dots,c_m\}$
in such a way that every agent from $\{1,\dots,n\}$
has value $-1$ for at least one of them.
But that is possible if and only if
there exists a set cover in the original instance $\calI$.
\end{proof}

In lieu of this computational hardness,
we establish upper-bounds on the required number of dubious chores to make a real allocation envy-free.
While \EF{1} allocations of chores do exist and can be computed in polynomial time \citep{aziz2022fair,BSV21approximate}, such allocations may require many (i.e., $n(n-1)$) dubious chores to become envy-free. 
This is because for each pairwise envy relation between agents, the envied agent can dubiously receive the ``worst'' chore of the envious agent
to remove envy.
Example \ref{ex:ef1:defnn-1} demonstrates that this bound is in fact tight.

\begin{proposition}
\label{prop:ef1:defnn-1}
Given an instance $\calI = \langle \calN, \calM, \calV \rangle$,
every \EF{1} allocation $A$ is
also \DEF{n(n-1)}. 
\end{proposition}

\begin{example}
\label{ex:ef1:defnn-1}
The instance presented in Table \ref{table:example:def-n(n-1)} demonstrates an 
\EF{1} allocation $A$ that is \DEF{n(n-1)} but not \DEF{k} for any $k < n(n-1)$. Note that $A$ is extremely ineffective---any other allocation would yield a Pareto improvement. Furthermore, we note that this allocation could be an output of the \texttt{Envy Graph} algorithm.

\begin{table}[t!] 
    \centering
    \setlength{\tabcolsep}{2pt}
    \begin{tabular}{c|c c c c} 
        agent & $c_1$ & $c_2$ &  $\cdots$ & $c_n$  \\
        \hline
        $1$ & $\circled{-1}$ & $0$ & & $0$ \\ 
        $2$ & $0$ & $\circled{-1}$ & & $\cdots$ \\ 
        $\vdots$ &  & $\vdots$ &  $\ddots$ & $\vdots$ \\ 
        $n$ & $0$ & $0$ & $\cdots$ & $\circled{-1}$
    \end{tabular}
    \caption{\EF{1} allocation that is not \DEF{k} for any $k < n(n-1)$.}
    \label{table:example:def-n(n-1)}
\end{table}
\end{example}

We next demonstrate stronger results for allocations produced by \texttt{Round Robin}, which require at most $n-1$ dubious chores to become envy-free. The improvement arises since each agent chooses the best of the remaining chores for each round.
Hence, in each round, there is a single agreed-upon ``worst'' chore, the latest chore allocated, that can eliminate all envy-relations when dubiously allocated to all envious agents.

\begin{theorem}
\label{thm:round-robin:defn-1}
Given an instance $\calI = \langle \calN, \calM, \calV \rangle$,
\texttt{Round Robin} returns a \DEF{(n-1)} allocation. 
\end{theorem}
\begin{proof}
Without loss of generality, let $A$ be a real allocation generated by \texttt{Round Robin} by the order $\sigma = \{1,2,\ldots,n\}$. Let $T$ be the last agent allocated a chore and let $T' = (T+1) ~mod~ n$ be the subsequent agent in $\sigma$. We know that $A$ is \EF{1} and, specifically $\forall i \neq T'$ and $k \neq i$, $v_i(A_i \backslash\{\tilde{c}_i\}) \geq v_i(A_k)$, where $\tilde{c}_i$ is the last chore allocated to each agent $i$ by \texttt{Round Robin} \citep{aziz2022fair,Caragiannis19:unreasonable}.
The subsequent agent $T'$ is not envious of any other agent $i$,
since for every chore $c$ of $T'$
there is one chore of $i$ that was chosen after $c$.
We also know that in the last $n$ rounds of \texttt{Round Robin}, each agent $i \neq T$ chose their chore $\tilde{c}_i$ over the last allocated chore $\tilde{c}_T$, meaning that $v_i(A_i) \geq v_i(A_i \cup \{\tilde{c}_T\} \backslash \{\tilde{c}_i\})$. Therefore for every agent $i \neq T'$ and $k \neq i$, we have $v_i(A_i) \geq v_i(A_k \cup \{\tilde{c}_i\}) \geq v_i(A_k \cup \{\tilde{c}_T\})$. As a result, if a dubious copy of $\tilde{c}_T$ were given to every agent except $T$, no agent would envy any other. This is $n-1$ dubious chores.
\end{proof}

Allotting $n-1$ dubious chores for allocations produced by \texttt{Round Robin} is similar to the case of goods allocations satisfying \emph{strong envy-freeness up to one good} (\sEF{1}) \citep{conitzer2019group}, in which all envy to each agent $i$ can be eliminated by removing a single good in $i$'s bundle $A_i$. \citet{Hosseini2020:Fair} observe that hiding $n-1$ of these goods can make \sEF{1} allocations envy-free and that this bound is tight for the existence of any hidden envy-free allocation.
Likewise, with chores, \cref{ex:prelims_ex} above demonstrates that the existence of \emph{any} \DEF{k} allocation is tight at $k=n-1$. This is because some agent $i$ must receive $c_n$ in the real allocation and they will envy every other agent by a value of at least $1$. Therefore, at least $n-1$ chores must be dubiously copied to make $i$ not envious, so any real allocation cannot be \DEF{(n-2)}.

\citet{conitzer2019group} discuss that both \texttt{Round Robin} and \texttt{Envy Graph} yield \sEF{1} allocations for goods; however, only \texttt{Round Robin} of these yields \DEF{(n-1)} allocations for chores, which are notably also \EF{1}.
Example \ref{ex:ef1:defnn-1} previously demonstrated that \EF{1} allocations furnished by 
\texttt{Envy Graph} may be not \DEF{(n-1)}.
Finally, 
\DEF{(n-1)} allocations are 
not necessarily
\EF{1}, as in Example \ref{ex:DEF-not-EF1}. 

\begin{example}
Fix $n=3$ and $m=4$ in the instance depicted by Table \ref{tab:DEF-not-EF}. The real allocation indicated by circled valuations is \DEF{2} with respect to the dubious set $D=\{c'_1,c'_2\}$ and allocation $A^D = \{\emptyset, \{c_1', c_2'\}, \emptyset\}$, but it is not \EF{1}.
\label{ex:DEF-not-EF1}
\end{example}

\section{Fairness and Efficiency}
\label{sec:fairness_and_efficiency}

\begin{table}[t!] 
    \centering
    \renewcommand{\arraystretch}{1.5}
    \setlength{\tabcolsep}{2pt}
    \begin{tabular}{c|c c c c | c c}
         agent &  $c_1$ & $c_2$ & $c_3$ & $c_4$ & $c_1'$ & $c_2'$ \\
        \hline
        $1$ 
            & $\circled{-3}$ & $\circled{-2}$ & $-1$ & $-5$ & $-3$ & $-2$ \\
        $2$
            & $-1$ & $-1$ & $\circled{-1}$ & $-1$ & $\dashedcirc{0}$ & $\dashedcirc{0}$\\
        $3$ 
            & $-1$ & $-1$ & $-1$ & $\circled{-1}$ & $-1$ & $-1$
    \end{tabular}
    \caption{An example allocation that is \DEF{2} but not \EF{1}.}
    \label{tab:DEF-not-EF}
\end{table}

In this section, we proceed to demonstrate the existence and computation of \DEF{k} along with fractional Pareto optimality.
Specifically, we show that 
we can efficiently compute \DEF{(n-1)} and \fPO{} allocations
in four special cases: if agents have identical, binary, or bivalued valuations,
or when there are two types of chores.
However, we start by a general result that
a \DEF{(2n-2)} and \fPO{} allocation always exist.
Our approach is based on the algorithm for finding
fair and efficient allocations for goods developed by \citet{barman2019proximity}.
We note the same algorithm was utilized by \citet{akrami2023fair}
to attain an \EF{1} and \fPO{} allocation by introducing at most $n-1$ actual copies of chores.


\begin{theorem}
\label{thm:def2n-2+po}
Given an instance $\calI = \langle \calN, \calM, \calV \rangle$,
there always exists an \DEF{(2n-2)} and \fPO{} allocation.
\end{theorem}
\begin{proof}
\citet{barman2019proximity} showed%
\footnote{
\citet{barman2019proximity} only considered goods
but the same algorithm works for chores~\citep{BraSan-2023-CEEIchores}.
However, the algorithm is based on finding
\emph{competitive equilibrium with equal incomes} (CEEI),
which can be computed in polynomial time for goods.
Only an XP algorithm for chores parameterized by the number of agents or chores is known; the existence of nonparameterized polynomial-time algorithm is
unknown.}
that there always exists
a Fisher market equilibrium $(A,p)$ such that
for every agent $i \in \calN$,
\begin{align}
    \label{eq:def2n-2+po}
    \notag
    p(A_i) &\le 1 \mbox{ and }
        \exists c \in \calM \mbox{ such that }
        p(A_i \cup \{c\}) \ge 1, \mbox{ or}\\
    p(A_i) &> 1 \mbox{ and }
        \exists c \in A_i \mbox{ such that }
        p(A_i \setminus \{c\}) < 1.
\end{align}
Since $(A,p)$ is an equilibrium,
we know that $A$ is \fPO{}.
In what follows, we show that $A$ is
\DEF{(2n-2)} as well.

To this end, let us denote by $i_{\max} \in \arg \max_{i \in \calN} p(A_i)$ an agent with maximally priced bundle
and by $c_{\max} \in \arg \max_{c \in \calM} p(c)$
a chore with the highest price.
Let $D$ consist of $2n-2$ dubious copies of chore $c_{\max}$,
which we denote $c_{\max}'$,
and let $A^D$ be a
dubious allocation in which
each agent in $\calN \setminus \{i_{\max}\}$
receives two dubious copies of this chore.
Next we show the augmented allocation
$A^* = A \cup A^D$ is envy-free. 

Observe that every agent $i \in \calN$
does not envy agent $i_{\max}$, even in the real allocation $A$.
Indeed, since $i$ receives only its minimum pain per buck chores,
\begin{align*}
    v_i(A_i) &= - p(A_i) \cdot \min_{c \in \calM} \textstyle\frac{|v_i(c)|}{p(c)}
        \ge p(A_{i_{\max}}) \cdot \max_{c \in \calM} \textstyle\frac{v_i(c)}{p(c)}\\
        &\ge \textstyle\sum_{c \in A_{i_{\max}}} p(c) \cdot \textstyle\frac{v_i(c)}{p(c)} 
        = v_i(A_{i_{\max}}).
\end{align*}

Hence, in $A^*$ no agent envies $i_{\max}$ as well.
Now, let us take arbitrary $i,j \in \calN$ such that $i \neq i_{\max}$
and show that $j$ does not envy $i$.
Observe that
\begin{align*}
    v_j(A^*_i) &= v_j(A_i) + 2 v_j(c_{\max}') \\
    &=\textstyle\sum_{c \in A_i} \left( p(c) \textstyle\frac{v_j(c)}{p(c)} \right)
        + 2 p(c_{\max}) \textstyle\frac{v_j(c_{\max}')}{p(c_{\max})}\\
    &\le - (p(A_i) + 2 p(c_{\max})) \cdot \min_{c \in \calM} \textstyle\frac{|v_j(c)|}{p(c)}.
\end{align*}

From \cref{eq:def2n-2+po} we know that $p(A_i) + 2 p(c_{\max}) \ge p(A_j)$.
Hence,
$
    v_j(A^*_i) \le - p(A_j) \cdot  \min_{c \in \calM} \frac{|v_j(c)|}{p(c)}
                = v_j(A_j)
                = v_j(A^*_j).
$
Thus, $A^*$ is envy-free, so $A$ is \DEF{(2n-2)}.
\end{proof}


Next, we focus on four restricted domains of preferences.
We utilize their structure to strengthen our result and prove
the existence of algorithms that find
\DEF{(n-1)} and \fPO{} allocations in polynomial time.
We begin by considering identical and then binary valuations.

\begin{theorem}
\label{thm:defn-1+po:identical}
For every instance $\calI = \langle \calN, \calM, \calV \rangle$
with identical valuations,
there exists a \DEF{(n-1)} and \fPO{} allocation
and it can be computed in polynomial time.
\end{theorem}
\begin{proof}
Observe that with identical valuations every allocation is \fPO{}.
Hence, by \cref{thm:round-robin:defn-1},
\texttt{Round Robin} algorithm returns a
\DEF{(n-1)} and \fPO{} allocation in such case.
\end{proof}


\begin{theorem}
\label{thm:defn-1+po:binary}
For every instance $\calI = \langle \calN, \calM, \calV \rangle$
with binary valuations,
there exists a \DEF{(n-1)} and \fPO{} allocation
and it can be computed in polynomial time.
\end{theorem}
\begin{proof}
The following algorithm returns a \DEF{(n-1)} and \fPO{} allocation.
First, while possible, assign every chore to any agent that values it at $0$.
This ensures that the final allocation will
have the minimal possible total cost for the agents,
also among fractional allocations.
Thus, it will be \fPO{}.
Let us assign 
the remaining \emph{common chores}, i.e., those valued $-1$ by all agents,
by \texttt{Round Robin} algorithm.
Then, either every agent receives exactly the same number of common chores
or there are two groups of agents with $k$ and $k-1$ such chores
for some $k \in \mathbbm{N}$.
In the former case, the allocation is envy-free, hence \DEF{(n-1)} as well.
In the latter case, if we give a dubious copy of some common chore to every agent
that receives $k-1$ common chores (and there are at most $n-1$ such),
we obtain an envy-free augmented allocation as well.
\end{proof}

We note that the algorithm described above
also guarantees that the output allocation satisfies \EFX{}.


For bivalued valuations,
we will build on the algorithms
introduced independently by
\citet{ebadian22:fairly} and
\citet{garg2022fair}
that for every such instance find an
\EF{1} and \PO{} allocation.
The algorithm relies on Fisher market equilibrium
and in order to use it,
we first show the following lemma.

\begin{lemma}
\label{lem:pef1}
Given an instance $\calI = \langle \calN, \calM, \calV \rangle$,
and a \pEF{1} Fisher market equilibrium $(A,p)$,
it holds that $A$ is a \DEF{(n-1)} and \fPO{} allocation.
\end{lemma}

\begin{proof}
Since an allocation in Fisher market equilibrium is always \fPO{} \citep{Mas-Colell95:Microeconomic},
it suffices to show that if $(A,p)$ is \pEF{1},
then allocation $A$ is $\DEF{(n-1)}$.

To this end, let us denote by $i_{\max}$ the agent with maximally priced bundle,
i.e., $i_{\max} \in \arg \max_{i \in \calN} p(A_i)$,
and by $c_{\max}$ a chore with the highest price,
i.e., $c_{\max} \in \arg \max_{c \in \calM} p(c)$.
Now, we construct a dubious allocation $A^D$ where $D$ contains $n-1$ copies of $c_{\max}$, denoted by $c_{\max}'$, which are allocated one each to agents $\calN \setminus \{i_{\max}\}$. We show that the obtained augmented allocation
$A^* = A \cup A^D$ is envy-free as follows.

First, observe that no agent envies $i_{\max}$, even in the real allocation $A$.
Indeed, every agent $i \in \calN$ receives only its minimum pain per buck chores,
therefore
\begin{align*}
    v_i(A_i) & = - p(A_i) \cdot \min_{c \in \calM} \textstyle\frac{|v_i(c)|}{p(c)}
             \ge p(A_{i_{\max}}) \cdot \max_{c \in \calM} \frac{v_i(c)}{p(c)} \\
             & \ge \textstyle\sum_{c \in A_{i_{\max}}} p(c) \cdot \textstyle\frac{v_i(c)}{p(c)}
             = v_i(A_{i_{\max}}).
\end{align*}

Hence, in $A^*$ also no agent envies $i_{\max}$.
Thus, let us take arbitrary $i,j \in \calN$
such that $i \neq i_{\max}$ and prove that
$j$ does not envy $i$.
Observe that
\begin{align*}
    v_j(A^*_i) & = v_j(A_i) + v_j(c_{\max}')\\
               &= \textstyle\sum_{c \in A_i} \left( p(c) \textstyle\frac{v_j(c)}{p(c)} \right)
                    + p(c_{\max}) \textstyle\frac{v_j(c_{\max}')}{p(c_{\max})} \\
               & \le - (p(A_i) + p(c_{\max})) \cdot \min_{c \in \calM} \textstyle\frac{|v_j(c)|}{p(c)}.
\end{align*}
Since $(A,p)$ is \pEF{1},
we get that $p(A_j) - p(c) \le p(A_i)$ for some $c \in \calM$.
Hence, $p(A_j) \le p(A_i) + p(c_{\max})$ and
\(
    v_j(A^*_i) \le - p(A_j) \cdot \min_{c \in \calM} \textstyle\frac{|v_j(c)|}{p(c)}
                = v_j(A_j)
                = v_j(A^*_j).
\)
In conclusion, $A^*$ is envy-free, so $A$ is \DEF{(n-1)}.
\end{proof}

\begin{figure*}[t!]
\centering
\begin{tabular}{ccc}
    \includegraphics[width=0.31\textwidth]{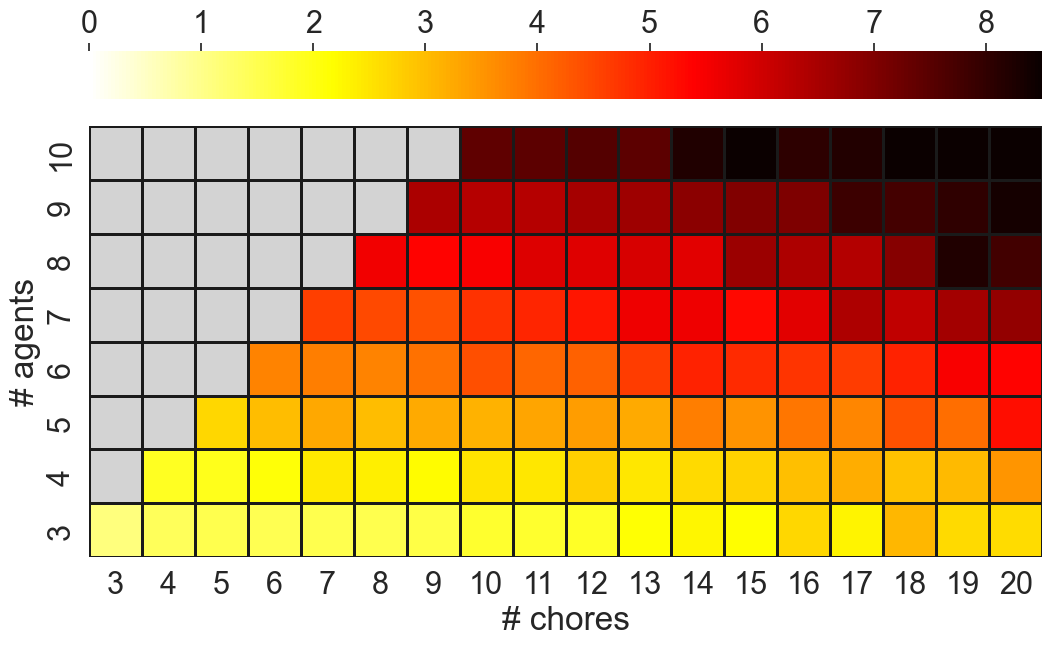}
     & 
    \includegraphics[width=0.31\textwidth]{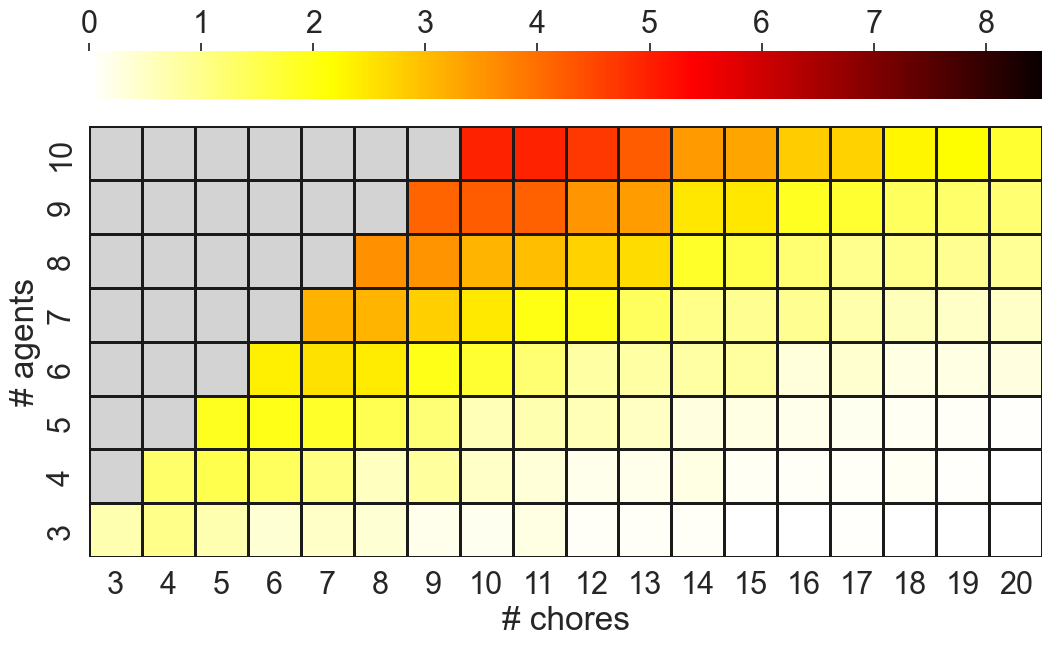}
     &
    \includegraphics[width=0.31\textwidth]{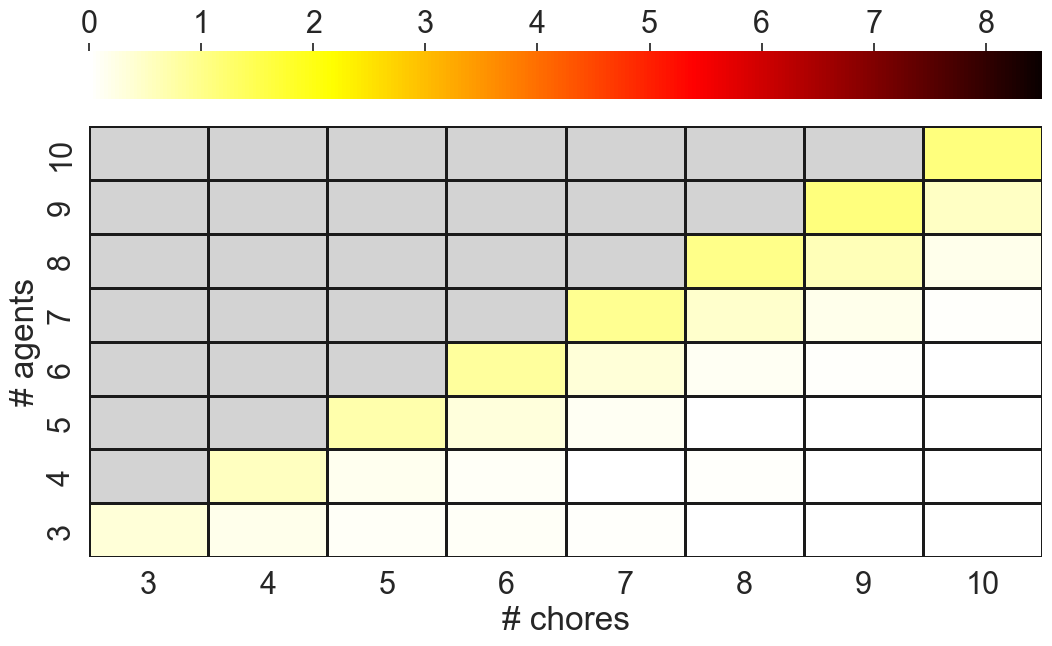}\\
    (a) \texttt{Envy Graph} & (b) \texttt{Round Robin} & (c) \texttt{PO} \\
\end{tabular}
\caption{Results for synthetic data for (a) \texttt{Envy Graph}, (b) \texttt{Round Robin}, and (c) \texttt{PO} algorithms. Grey tiles indicate no samples.}
\label{fig:exps}
\end{figure*}

\begin{theorem}
\label{thm:defn-1+po:bivalued}
Given an instance $\calI = \langle \calN, \calM, \calV \rangle$
with bivalued valuations,
there exists a \DEF{(n-1)} and \fPO{} allocation
and it can be computed in polynomial time.
\end{theorem}

\begin{proof}
\citet{garg2022fair} and
\citet{ebadian22:fairly} provided a polynomial -time algorithm that finds
a \pEF{1} Fisher market equilibrium
for every instance with bivalued valuations.
This, combined with \cref{lem:pef1}, yields the thesis.
\end{proof}


Finally, let us consider the case of two types of chores.

\begin{theorem}
\label{thm:defn-1+po:2types}
Given an instance $\calI = \langle \calN, \calM, \calV \rangle$
with two types of chores,
there exists a \DEF{(n-1)} and \fPO{} allocation
and it can be computed in polynomial time.
\end{theorem}

\begin{proof}
By definition, since the instance has two types of chores,
we can split $\calM$ into two sets $X$ and $Y$
such that for every agent $i \in \calN$ and chores $c, c'$ in one of the sets
we have $v_i(c) = v_i(c')$.
Thus, for every agent $i \in \calN$,
by $v^X_i$ and $v^Y_i$ let us denote
the agent's valuations of chores
from sets $X$ and $Y$ respectively.
\citet{aziz2023twotypes} provided a polynomial-time algorithm that
for every instance with two types of chores
finds an \fPO{} allocation $A$ and agent $i^* \in \calN$ such that:

\begin{itemize}
    \item[(1)] only agent $i^*$ may receive chores of two different types,
    \item[(2)] every agent $i \in \calN$ for which
    $v^X_i/v^X_{i^*} < v^Y_i/v^Y_{i^*}$
    receives only items of type $X$,
    \item[(3)] every agent $i \in \calN$ for which
    $v^X_i/v^X_{i^*} > v^Y_i/v^Y_{i^*}$
    receives only items of type $Y$, and
    \item[(4)] in the instance with identical valuations obtained from $\calI$
    by changing valuations of every agent to that of agent $i^*$,
    i.e., $\calI' = \langle \calN, \calM, v_{i^*} \rangle$,
    allocation $A$ is \EF{1}.
\end{itemize}

Thus, it suffices to show that the allocation satisfying these conditions
is also $\DEF{(n-1)}$. 
To this end, for every chore $c$ of type $X$ or $Y$
let us set $p(c) = v^X_{i^*}$ or $p(c) = v^Y_{i^*}$, respectively.
First, let us show that in such a case $(A,p)$
is a Fisher market equilibrium.
To this end, observe that
for every agent $i \in \calN$ with 
$v^X_i/v^X_{i^*} < v^Y_i/v^Y_{i^*}$
we have that $MPB_i$ consists of all chores of type $X$.
And by condition (2),
these are the only chores that agent $i$ receives.
Analogously, from condition (3) we get that
every agent $i \in \calN$ for which 
$v^X_i/v^X_{i^*} > v^Y_i/v^Y_{i^*}$
receives only its minimum pain per buck chores as well.
Finally, for agent $i^*$, we get that $MPB_{i^*} = \calM$.
Hence, every agent receives only
its minimal pain per buck chores,
which means that $(A,p)$
is indeed a Fisher market equilibrium.

Now, let us show that $(A,p)$ is \pEF{1}.
Fix arbitrary agents $i,j \in \calN$.
If $A_i \neq \emptyset$, then
by condition (4) there exists $c \in A_i$
such that $v_{i^*}(A_i \setminus \{c\}) \ge v_{i^*}(A_j)$.
Thus, we get $p(A_i \setminus \{c\}) \ge p(A_j)$,
which means that $(A,p)$ indeed is \pEF{1}.
Therefore, the thesis follows from \cref{lem:pef1}.
\end{proof}

\section{Experiments}

We experimentally investigate
the minimal number of dubious chores
needed to make an allocation \EF{}.
We generated a synthetic data set varying the number of agents $n$ from $3$ to $10$ and chores $m$ from $3$ to $10$ or $20$.
For each pair $(n,m)$, we generated $100$ instances with independent binary valuations such that for each $i \in \calN$ and $c \in \calM$ the valuation $v_{i}(c)$ is $-1$ with probability $0.7$ and $0$ with probability $0.3$.
To avoid the trivial \EF{} case, we assert $m \geq n$ and set the last chore in each instance to be valued at $-1$ for each agent.

Next, we computed allocations using three different algorithms: (a) \texttt{Envy Graph}, (b) \texttt{Round Robin}
, and (c) \texttt{PO}. 
The first two produce a single allocation per instance that satisfies \EF{1}.
The last algorithm searches through all \PO{} allocations and returns one that is $\DEF{k}$ for minimal $k$. 
\cref{fig:exps} shows heatmaps of average minimum number of dubious chores needed to make the output allocations \EF{}.

In \cref{fig:exps}a, we see that this minimum count for \texttt{Envy Graph} allocations increase with the number of agents or chores. For \texttt{Round Robin}, in \cref{fig:exps}b, we see that this count also increases with the number of agents but decreases as we have more chores. 
This may be explained by the fact that with a large number of chores, it is easier for agents to choose chores valued $0$ to them but possibly $-1$ to other agents.
With many chores assigned this way, the envy can be reduced using a smaller number of dubious chores.
Moreover, observe that the average number of dubious chores needed is generally smaller for allocation \texttt{Round Robin} than with \texttt{Envy Graph}.
The difference between these two algorithms matches our theoretical findings, as we have an $n-1$ bound for \texttt{Round Robin}
(\cref{thm:round-robin:defn-1}), but the same bound for \texttt{Envy Graph} does not hold (\cref{ex:ef1:defnn-1}).

Finally, when looking at the optimal \PO{}
solution in \cref{fig:exps}c,
we see that the required number of dubious chores
increases with the increase in the number of agents,
but drops sharply, when we increase the number of chores.
In fact, whenever the number of chores is not almost equal
to the number of agents, we can obtain envy-freeness
with one or zero dubious chores in most cases.
All in all, our experiment shows that in practice
the number of dubious chores needed
is much smaller than our theoretical guarantees indicate.

\section{Conclusion}

We have proposed a novel epistemic framework
for fair allocations of chores
through the introduction of dubious chores
and \DEF{k}, an approximation
of envy-freeness which details the trade-off between fairness and transparency of an allocation.
Although finding \DEF{k} allocations with small $k$
is computationally hard, 
we have provided several guarantees
for the existence of such allocations
with and without 
Pareto optimality.
Our experimental results suggest that 
the number of dubious chores required to make an allocation free of envy
is lower in practice than our theoretical guarantees indicate.

Some of the problems considered in this paper
are still not fully resolved.
In particular,
the existence of
a \DEF{(n-1)} and \PO{} chore allocation
in every instance 
for agents with additive valuations seems possible.
Proving this can be a valuable step to
resolve the open existence problem of
\EF{1} and \PO{} allocations for chores.

\section*{Acknowledgements}
HH acknowledges support from NSF IIS grants \#2144413 and \#2107173. LX acknowledges NSF \#2007994 and  \#2106983. TW was partially supported by EPSRC under grant EP/X038548/.



\bibliographystyle{plainnat}
\bibliography{arxiv_main}


\end{document}